\documentclass[11pt,a4paper]{article}

\usepackage[margin=1in]{geometry}
\usepackage{microtype}

\usepackage{times}

\usepackage[square,numbers]{natbib}
\usepackage{xcolor} 
\usepackage{amsmath,amssymb} 
\usepackage{amsthm} 
\usepackage{amsfonts} 
\usepackage{nicefrac} 
\usepackage{subfigure} 
\usepackage{enumitem} 
\usepackage{setspace} 
\usepackage{algorithm2e} 
\usepackage{hyperref} 
\usepackage{graphicx}
\usepackage{physics} 
\usepackage{cleveref} 
\usepackage{multirow}
\usepackage[most]{tcolorbox}

\usepackage{booktabs}
\usepackage{tabularx}

\usepackage{tikz} 
\usetikzlibrary{arrows.meta} 
\usetikzlibrary{decorations.markings}
\usetikzlibrary{backgrounds}
\usetikzlibrary{positioning,chains,fit,shapes,calc}
\usetikzlibrary{angles,patterns}

\usetikzlibrary{matrix,quotes}
\usetikzlibrary{decorations.pathmorphing, calc}

\newtheorem{proposition}{Proposition}[section]
\newtheorem{lemma}{Lemma}[section]
\newtheorem{remark}{Remark}[section]

\newtheorem{definition}{Definition}[section] 

\newcommand{\Z}{\mathbb{Z}}

\newcommand{\R}{\mathbb{R}}

\newcommand{\T}{\mathbb{T}}

\newcommand{\vc}[1]{\mathbf{#1}} 

\newcommand{\inv}[1]{#1^{-1}}

\newcommand{\floor}[1]{\left\lfloor{#1}\right\rfloor} 
\newcommand{\innerprod}[1]{\langle{#1}\rangle} 

\usepackage{tikz}
\usetikzlibrary{shapes, arrows}
\tikzstyle{block} = [rectangle, rounded corners, draw=black, thick,
    text centered, minimum height=2em, minimum width=8em]
\tikzstyle{blueblock} = [block, fill=blue!20]
\tikzstyle{redblock} = [block, fill=red!20]
\tikzstyle{line} = [draw, thick, -latex']

\title{Has MIMO decoding been proved hard from lattice problems?} 

\author{
  Yang~Li
  School of Information Technology\\
  Deakin University\\
  Burwood, VIC 3125 \\
  \texttt{kelvin.li@deakin.edu.au} \\
}

\begin{document}
\maketitle


\begin{abstract}
Multiple-input multiple-output (MIMO) technology is fundamental to modern wireless communication. Physical layer security seeks to protect transmitted information by exploiting properties of the noisy communication channel. Dean and Goldsmith proposed a polynomial time reduction from lattice problems to MIMO decoding by adapting Regev's reduction for learning with errors (LWE). If valid, this reduction would give physical layer security a strong computational foundation based on the hardness of established lattice problems. Subsequent works presented attacks and counterexamples against the resulting construction, casting doubt on its security but leaving the precise validity and limitations of the underlying reduction incompletely understood. We provide a theoretical examination of the revised reduction and identify the structural features of the LWE reduction that fail to carry over to the non-modular MIMO setting, hence showing that its published proof does not establish the claimed hardness of MIMO decoding. Our results distinguish flaws in the hardness proof from direct attacks on particular parameter choices and clarify what would be required of any attempted repair. We do not rule out physical layer security for MIMO systems in general, but show that the claimed lattice hardness guarantee does not follow from the existing reduction.
\end{abstract}

\section{Introduction}

Multiple-input and multiple-output (MIMO) transmitting systems have been a core technology in wireless communication, due to their ability to accelerate data transmission by using multiple antennas on both the transmitter and receiver sides. As with other transmission systems, the security of MIMO communication has been extensively studied and remains critical to its practical deployment. Some research efforts have explored the physical layer security of MIMO systems, aiming to achieve secure data transmission based solely on the system's physical properties such as transmission noise, without relying on software layer security mechanisms. 

Along this line of work, Dean and Goldsmith observed the resemblance between the MIMO decoding problem and the learning with errors (LWE) problem \cite{regev2009lattices} in their original and a revised papers \cite{dean2013physical,dean2017physical}. In their work, Dean and Goldsmith attempted a similar hardness proof to Regev's LWE reduction, by constructing a polynomial time hybrid (classical and quantum) reduction from hard lattice problems to MIMO decoding. 
However, these attempts have been shown to be flawed by counterexamples presented independently by multiple groups \cite{steinfeld2015massive, korzhik2017investigation, sakzad2020comments}. 

In this work, we take a closer theoretical look at the nature of MIMO decoding and the LWE reduction strategies, and analyse why such reduction techniques from hard lattice problems do not carry over to the MIMO context. Consequently, it remains necessary to carefully determine whether the LWE like reduction produces samples from the distributions required by MIMO oracle calls. This proof level question is the main focus of our work.
Our contributions are summarised as follows.
\begin{itemize}
    \item We provide a systematic examination of the revised Dean-Goldsmith reduction \cite{dean2017physical} and identify several steps for which the required distributional or parameter claims are not established.
    
    \item We also show that some immediate modifications either leave parts of the reduction unsuccessful or unresolved, or change the underlying MIMO problem into a modular or restricted variant.
    
    \item Our analysis indicates that an LWE style reduction cannot be transferred directly to real-valued, non-modular noisy linear systems without introducing additional structure that reproduces the modular cancellation and conditional randomness used in Regev's proof. We illustrate this limitation through a subsequent MIMO construction whose related lattice reduction exhibits the same proof-level concerns.
\end{itemize}

We emphasise that our conclusions concern the claimed reduction rather than the general possibility of MIMO physical layer security. Showing that a published reduction does not establish its theorem does not, by itself, constitute a decoding attack, nor does it prove that every MIMO based security construction is impossible.

The remainder of this paper is organised as follows. Section 2 introduces the necessary lattice, LWE and MIMO background specific to the Dean-Goldsmith construction. 
Sections 3 to 5 examine the principal Dean-Goldsmith reduction steps and their gaps. Section 6 extends the examination of the BDD-to-MIMO reduction to a subsequent construction. Section 7 concludes the paper.

\section{Preliminary}


\subsection{Lattice}

The works in \cite{dean2013physical,dean2017physical} do not involve much of lattice theory, hence we will cover only the relevant concepts in this section. 

\begin{definition}
Let $\vc{v_1}, \dots, \vc{v_n} \in \R^m$ be a set of linearly independent vectors. The \textbf{lattice} \index{lattice} $L$
generated by $\vc{v_1}, \dots, \vc{v_n}$ is the set of integer linear combinations of them. That is, 
\begin{equation*}
    L = \{a_1 \vc{v_1} + \cdots + a_n \vc{v_n} \mid a_1, \dots, a_n \in \Z\}.
\end{equation*}
\end{definition}

If the set $B = \{\vc{v_1}, \dots, \vc{v_n}\}$ is linearly independent and generates the lattice, then $\vc{B}$ is a basis of the lattice, denoted as $L(\vc{B})$. Similar to basis of a vector space, a lattice basis need not be unique. Different to vector space basis, a lattice need not have an orthogonal basis. Hence, some lattice bases are considered ``better'' than others, if their vectors are more orthogonal and short. For example, the reductions in \cite{regev2009lattices,dean2013physical,dean2017physical} applied a lattice basis reduction algorithm, LLL \cite{lenstra1982factoring}, for the purpose of reducing the basis vectors' norms with an upper bound. 
 


Given lattices are discrete spaces, some interesting computational problems arise in lattices. These problems are interesting from mathematical perspective, as well as from cryptographic perspective because some of them became the security foundations of post quantum cryptography. The most well known lattice problems are the following two. 
\begin{definition}[Shortest Vector Problem (SVP)]
Given a lattice basis $\vc{B}$, find a shortest non-zero vector in the lattice $L(\vc{B})$, i.e., find a non-zero vector $\vc{v} \in L(\vc{B})$ such that $||\vc{v}|| = \lambda_1(L(\vc{B}))$. 
\end{definition}

\begin{definition}[Closest Vector Problem (CVP)]
Given a lattice basis $\vc{B}$ and a target vector $\vc{t}$ that is not in the lattice $L(\vc{B})$, find a vector in $L(\vc{B})$ that is closest to $\vc{t}$. 
\end{definition}


A special case of CVP is the bounded distance decoding (BDD) problem, which is used in LWE and MIMO reductions. The difference being a BDD target is within a bounded distance to the lattice. 
\begin{definition}[The Bounded Distance Decoding Problem]
Given a lattice basis $\vc{B}$ of an $n$-dimensional lattice $L$ and a target vector $\vc{t} \in \R^n$ satisfies $dist(\vc{t},L) \le \lambda_1(L)$, find a lattice vector $\vc{v} \in L$ that is closest to $\vc{t}$. 
\end{definition}


\subsection{Learning with errors}
\label{sec:lwe}

Given the MIMO reduction closely follows the learning with errors (LWE) reduction by \citeauthor{regev2009lattices}, we cover in this section the LWE distribution, LWE problem, and outline its reduction strategy on a very high level. 

\begin{definition}
\label{def:lweDist}
For a given vector $\vc{s} \in \Z_q^n$, the \textbf{LWE distribution} $\vc{A}_{\vc{s},\chi}$ over $\Z_q^n \times \Z_q$ is sampled by 
\begin{itemize}
    \item generating a random vector $\vc{a} \leftarrow \Z_q^n$,
    \item generating a random noise $\epsilon \xleftarrow{\chi}\; \Z_q$,
    \item outputting $(\vc{a}, b = \innerprod{\vc{s}, \vc{a}} + \epsilon \bmod q)$.
\end{itemize}  
\end{definition}

The LWE distribution was defined in modular integer domain, which is one of the key reasons behind its hardness reduction from lattice problems. A compact way to express multiple LWE samples is through a matrix form, so that we have $(\vc{A}, \vc{b}) \in \Z_q^{m \times n} \times \Z_q$ representing $m$ LWE samples.

\begin{definition}
For the parameter $q$ and the error distribution $\chi$ over $\Z_q$, the \textbf{LWE problem}, denoted by LWE$_{q,\chi}$, is to find the unknown vector $\vc{s}$ given LWE samples  $(\vc{A},\vc{b}) \sim \vc{A}_{\vc{s},\chi}$. 
\end{definition}


The LWE work by \citeauthor{regev2009lattices} established a polynomial time hybrid (classic and quantum) reduction from lattice problems to LWE. It is one of the key founding theories in lattice based cryptography \cite{regev2009lattices}. Here, we give a brief overview of the reduction, to provide sufficient context so that the reader can connect the MIMO reduction to the LWE reduction. 

There are three main steps in the LWE reduction as follows, also shown in \Cref{fig:lweReduction} and \Cref{fig:lweReductionIterativeStep}.
\begin{enumerate}
    
    \item A CVP oracle can be built from an LWE oracle with large discrete Gaussian samples.

    \item The CVP oracle with a quantum routine can produce smaller discrete Gaussian samples.

    \item The lattice problems can be solved efficiently with sufficiently small discrete Gaussian samples.
\end{enumerate}
Steps 1 and 2 are repeated till the discrete Gaussian samples are sufficiently small to solve the lattice problems. In \cite{dean2013physical,dean2017physical}, Steps 2 and 3 from \cite{regev2009lattices} are used almost like black boxes, hence we only explain Step 1 to provide the context for MIMO reduction.  

To build a CVP oracle upon an LWE oracle, the key step is to embed the CVP solution $\kappa_{L^*}\vc{x}$ as the secret key for LWE samples, which can then be supplied to the LWE oracle to find $\kappa_{L^*}\vc{x}$. This is achieved by producing a sample  
\begin{align*}
    (L^{-1}\vc{v} \bmod p, \innerprod{\vc{x},\vc{v}}/p+ e \bmod 1).
\end{align*}
The first part of the above sample is proved to be uniform in $\Z_p^n$. The second part can be decomposed into
\begin{align*}
    \innerprod{\vc{x}, \vc{v}}/p + e \bmod 1 &= \innerprod{\kappa_{L^*}\vc{x} + \vc{\delta}, \vc{v}}/p + e \bmod 1\\
    &= \underbrace{\innerprod{\kappa_{L^*}\vc{x}, \vc{v}}/p}_{\text{embedded secret}} + \underbrace{(\innerprod{\vc{\delta}, \vc{v}}/p + e)}_{\text{aggregated  noise}} \bmod 1
\end{align*}
Hence, the sample is almost identical to an LWE sample, provided the parameters are within the correct ranges. This reveals two important substeps within the iterative structure in \Cref{fig:lweReductionIterativeStep}. 
\begin{itemize}
    \item First, the LWE oracle needs to be able to work with smaller errors, because the aggregate error above may not match the exact distribution of $e \sim \nu_{\alpha}$ in the LWE distribution definition. The solution is to pad samples with additional noise (Lemma 3.7 \cite{regev2009lattices}), while verifying the padded samples  match the LWE distribution (Lemma 3.6 \cite{regev2009lattices}). 
    
    \item Second, the immediately constructed CVP oracle from the LWE oracle and the discrete Gaussian samples only outputs the solution coefficients modular $p$. This is also to guarantee the sample follows closely with the LWE distribution for both parts (Corollary 3.10 and Lemma 3.11 \cite{regev2009lattices}).

\end{itemize}





\begin{figure}[hbt!]
    \centering
    \includegraphics[page=14]{images/Lattice_crypto_tikz_folder.pdf}
    \caption{Reductions from the lattice problems to the LWE problem.}
    \label{fig:lweReduction}
\end{figure}

\begin{figure}[hbt!]
    \centering
    \includegraphics[page=24]{images/Lattice_crypto_tikz_folder.pdf}
    \caption{The iterative process in the LWE reduction. It starts with an assumed  LWE$_{p,\Psi_{\alpha}}$ oracle and discrete Gaussian samples. The process stops when the reproduced discrete Gaussian samples are short enough to solve the lattice problems.}
    \label{fig:lweReductionIterativeStep}
\end{figure}


\subsection{MIMO decoding}
Consider a real-valued MIMO system that consists of $n$ transmit antennas and $m$ receive antennas. Let $\vc{A} \in \R^{m \times n}$ be the channel matrix that affects the input signal before it reaches the receiver. The MIMO channel model is captured by
\begin{align*}
    \vc{y}=\vc{A} \vc{x} + \vc{e}.
\end{align*}
In \cite{dean2013physical,dean2017physical}, the entries of $\vc{A}$ are i.i.d. continuous Gaussian samples from $\nu_k := N(0, k^2/2\pi)$, where the parameter $k$ is known as scale in lattice based cryptography. 
The noise vector $\vc{e}$ also has i.i.d. entries but from a different Gaussian distribution $\nu_{\alpha} := N(0,\alpha^2/2\pi)$.
The input signal $\vc{x}$'s entries are taken from an uncentred $M$-PAM constellation $\Omega=[0,M) \cap \Z$ for a predetermined integer $M$.

To decode the MIMO system, the receiver could compute its Moore-Penrose inverse\footnote{Since $m \ge n$ and $\vc{A}$'s entries are independently sampled from a continuous Gaussian distribution, the channel matrix $\vc{A}$ has full column rank with probability one, hence the particular simple algebra expression of the pseudoinverse.} $\vc{A}^{\dagger} = \inv{(\vc{A}^T \vc{A})}\vc{A}^T$ to get
\begin{align*}
    \vc{A}^{\dagger} \vc{y} = \vc{x} + \vc{A}^{\dagger}\vc{e}.
\end{align*}
This is also known as zero-forcing in signal processing. A limitation of zero-forcing is that if $\vc{A}$ has a small minimum singular value, $\vc{A}$'s pseudoinverse could amplify the noise magnitude, increasing the probability of decoding error.

Given the signal is transmitted on a public channel, decoding is not prohibited for any eavesdropper on the channel. Dean and Goldsmith then used SVD precoding to create a decoding asymmetry, which they claimed would make decoding exponentially hard for the eavesdropper. 

To do so, before transmitting the signal, the sender decomposes the channel matrix to its singular values $\vc{A} = \vc{U} \vc{\Sigma} \vc{V}^T$, and transmits the preprocessed signal 
\begin{align*}
    \Tilde{\vc{x}} = \vc{V} \vc{x}
\end{align*}
through the MIMO channels instead of the original signal $\vc{x}$. This process is known as linear precoding.

Upon receiving the signal in the following form
\begin{align}
    \label{eq:receiver}
        \vc{y} = A \Tilde{\vc{x}} + \vc{e} =  \vc{U} \vc{\Sigma} \vc{V}^T \vc{V} \vc{x} + \vc{e} = \vc{U} \vc{\Sigma} \vc{x} + \vc{e},
    \end{align}
the receiver postprocesses it to obtain 
    \begin{align*}
        \Tilde{\vc{y}} = \vc{U}^T \vc{y} = \vc{U}^T \vc{U} \vc{\Sigma} \vc{x} + \vc{U}^T \vc{e} = \vc{\Sigma} \vc{x} + \Tilde{\vc{e}}.
    \end{align*}
As long as the noise is reasonably small, the receiver can decode $\Tilde{\vc{y}}$ to accurately approximate the original signal $\vc{x}$.

For the eavesdropper, their wiretap channel has a different channel matrix $\vc{B}$, which then transforms their received signal in the form
\begin{align*}
    \Bar{y} = \vc{B}\vc{V}\vc{x} + \Bar{e}
\end{align*}
that is different from what the legitimate receiver obtained in \Cref{eq:receiver}. Given $\vc{V}$ is part of $\vc{A}$'s SVD, it is unitary hence implies $\vc{B} \vc{V}$ is still a matrix with i.i.d. Gaussian entries. In this case, the eavesdropper gets no hint on decoding the received signal, hence the decoding task is no different from decoding a MIMO transmitted signal with no preprocessing step by the sender. 




\subsection{MIMO distribution}

\begin{definition}
For an arbitrary signal vector $\vc{x} \leftarrow \Omega^n$, the \textbf{MIMO distribution} $\vc{A}_{M,\alpha,k}$ over $\R^n \times \R$ is obtained by taking the following steps. At the $i$th receive antenna, with $M, \alpha, k$ being set:
    \begin{enumerate}
        
        \item Sample $\vc{a}_{i} \leftarrow \Psi_k$ over $\R^n$, where $\vc{a}_i$ is the $i$th row of $\vc{A}$.

        \item Sample $e_i \leftarrow \Psi_{M \alpha}$ over $\R$.

        \item Compute $y_i = \innerprod{\vc{a}_i, \vc{x}} + e_i$.

        \item Output $(\vc{a}_i, y_i)$.
    \end{enumerate}
\end{definition}

\paragraph{An equivalent definition} Given that the signal constellation $\Omega$ and the noise distribution $\Psi_{M\alpha}$ have a common integer factor $M$, the MIMO distribution can be equivalently defined as 
\begin{align*}
    \vc{A}_{M,\alpha,k} = \{\vc{a}_i, y_i = \innerprod{\vc{a}_i, \vc{x}}/M + e_i\}_{i=1}^N, \text{ where } e_i \leftarrow \Psi_{\alpha}.
\end{align*}

\begin{definition}[MIMO-Search$_{M,\alpha,k}$]
    Let $M \ge 2$, $\alpha \in (0,1)$, $k \in \R$, $n > 0$. Given a polynomial number of samples from the MIMO distribution $\vc{A}_{M,\alpha,k}$ that forms an $m \times n$ matrix $\vc{A}$, find the signal vector $\vc{x} \in \Omega^n$.
\end{definition}

\paragraph{MIMO decoding is equivalent to solving MIMO-search} According to the above definition, the MIMO-Search problem is equivalent to the MIMO decoding problem. More precisely, drawing $m$ samples from the MIMO distribution $\vc{A}_{M,\alpha,k}$ produces $(\vc{A}, \vc{y}) \in \R^{m\times n} \times \R^m$, the first part of the MIMO samples corresponds to the channel matrix $\vc{A}$, and the second part of the MIMO samples corresponds to the received signal $\vc{y}$. Hence, finding the vector $\vc{x}$ in MIMO-search is equivalent to the MIMO decoding problem. 

The reduction of the lattice problems to LWE utilises much of its discreteness structure, \citeauthor{dean2017physical} hence also introduced a discrete analogue of the (continuous) MIMO distribution.\footnote{The definition informally appears before Lemma 1 in Section B \cite{dean2017physical}.} 

The discrete MIMO definition is based on the discrete Gaussian distribution over a lattice $L$. To recall, for all lattice vectors $\vc{x} \in L$, the discrete Gaussian distribution is defined as 
\begin{align*}
    D_{L,s,\vc{c}}(\vc{x}) = \frac{\rho_{s,\vc{c}}(\vc{x})}{\rho_{s,\vc{c}}(L)},
\end{align*}
where $\rho_s(\vb*{x}) = \exp( -\norm{\vb*{x}}^2/2\sigma^2)$ is the $0$-centred Gaussian function.

\begin{definition}
    Given an arbitrary lattice $L(\vc{A})$ and a number $r > \sqrt{2} \eta_{\epsilon}(L(\vc{A}))$, sample a vector $\vc{a} \leftarrow D_{L(\vc{A}),r}$, and an error $e \leftarrow \Psi_{\alpha}$, then output a discrete sample 
\begin{align*}
    \left(k \vc{a}/r, y = \innerprod{k \vc{a} / r, \vc{x}}/M + e \right) \leftarrow D_{M,\alpha,k}
\end{align*}
from the \textbf{discrete MIMO distribution}. 
\end{definition}

    



\subsection{Summary of similar distributions}

To complete this section, and for the ease of reference, we create a table of notations, and a table of similar LWE like distributions that appear throughout the paper.



\begin{table}[htbp]
\centering
\caption{Notation table}
\label{tab:notation}

\renewcommand{\arraystretch}{1.2}
\begin{tabularx}{0.8\linewidth}{@{} l l X @{}}
\toprule
\textbf{Notation} & \textbf{Meaning} \\
\midrule

$\nu_s$
    & Gaussian distribution $N(0,s^2/2\pi)$\\

$\nu_s^n$
    & $n$-dimensional $\nu_s$\\

$\T=\R/\Z$
    & 1-dimensional torus $\R \bmod \Z$. \\

$\Omega=[0,M)\cap\Z$
    & Uncentred $M$-PAM signal constellation for $M\geq 2$. \\

\bottomrule
\end{tabularx}
\end{table}

\begin{table}[htbp]
\centering
\caption{Comparison of the distributions considered in this paper.}
\label{tab:distribution-comparison}
\small
\renewcommand{\arraystretch}{1.25}
\begin{tabularx}{\linewidth}{
@{}
l
>{\raggedright\arraybackslash}p{0.27\linewidth}
>{\raggedright\arraybackslash}X
@{}
}
\toprule
Distribution & Domain & Sampling procedure \\
\midrule
LWE
&
$\vc{s}\in\Z_p^n$, $p\geq2$;
$\vc{A}_{\vc{s},\phi}$ is over $\Z_p^n\times\T$
&
For each $i\in[N]$, sample $\vc{a}_i$ uniformly from
$\Z_p^n$ and $e_i\leftarrow\phi$, and output
$\left(\vc{a}_i,
y_i=\innerprod{\vc{a}_i,\vc{s}}/p+e_i\bmod1\right)$
\\
\addlinespace
CLWE
&
$\vc{s}\in\R^n$, $\norm{\vc{s}}=1$, and $\gamma>0$;
$\vc{A}_{\vc{s},\beta,\gamma}$ is over $\R^n\times\T$
&
For each $i\in[N]$, sample
$\vc{a}_i\leftarrow\nu_1^n$ and
$e_i\leftarrow\nu_\beta$, and output
$\left(\vc{a}_i,
y_i=\gamma\innerprod{\vc{a}_i,\vc{s}}+e_i\bmod1\right)$
\\
\addlinespace
ILWE
&
$\vc{s}\in\Z^n$;
$\mathcal{D}_{\vc{s},\chi_a,\chi_e}$ is over
$\Z^n\times\Z$
&
For each $i\in[N]$, sample
$\vc{a}_i\leftarrow\chi_a^n$ and
$e_i\leftarrow\chi_e$, and output
$\left(\vc{a}_i,
y_i=\innerprod{\vc{a}_i,\vc{s}}+e_i\right)$
\\
\addlinespace
MIMO
&
$\vc{s}\in\Omega^n$;
$\vc{A}_{M,\alpha,k}$ is over $\R^n\times\R$
&
For each $i\in[N]$, sample
$\vc{a}_i\leftarrow\nu_k^n$ and
$e_i\leftarrow\nu_\alpha$, and output
$\left(\vc{a}_i,
y_i=\innerprod{\vc{a}_i,\vc{s}}/M+e_i\right)$
\\
\addlinespace
Discrete-MIMO
&
$\vc{s}\in\Omega^n$;
$D_{M,\alpha,k}$ is over $(k/r)L\times\R$
&
For each $i\in[N]$, sample
$\vc{u}_i\leftarrow D_{L,r}$ and
$e_i\leftarrow\nu_\alpha$, define
$\vc{a}_i=(k/r)\vc{u}_i$, and output
$\left(\vc{a}_i,
y_i=\innerprod{\vc{a}_i,\vc{s}}/M+e_i\right)$
\\
\bottomrule
\end{tabularx}
\end{table}

\section{The gap in discrete-to-continuous-MIMO reduction}

The Dean-Goldsmith reduction introduces an auxiliary discrete-MIMO distribution supported on a scaled lattice, while the assumed MIMO oracle expects samples with continuous Gaussian public components. Lemma 1 of \cite{dean2017physical} is intended to bridge these two domains. Its argument relies on the observation that, in the standard computational model, algorithms and oracles receive finite bit strings, so real-valued inputs must be represented using polynomially many bits. This finite precision assumption is acceptable in principle. However, geometric density at the chosen precision does not establish that the induced finite precision distribution is statistically close to the quantised continuous-MIMO distribution. 

This and the following two sections start with the original Lemma statement in \cite{dean2017physical}, followed by a brief explanation of their proof strategy, a theoretical examination, and remarks.

\paragraph{Lemma 1 \cite{dean2017physical}} Continuous-to-Discrete Samples. Given an oracle which can solve MIMO-Decision$_{M,\alpha,k}$, there exists an efficient algorithm to recover $\vc{x}$ given samples from $D_{M,\alpha,k}$.\footnote{There is a typo in MIMO-Decision. It should be MIMO-Search.}

\paragraph{Proof strategy} The key step to this lemma is densifying the lattice support $(k/r)L \times \R$ of the discrete-MIMO distribution, so that it is statistically indistinguishable from the continuous support $\R^n \times \R$ to the MIMO-search oracle. To do so, Dean and Goldsmith introduced the linear combinations of two discrete-MIMO samples 
\begin{align*}
    \lambda_i(\vc{a}_i, y_i) + \lambda_j (\vc{a}_j, y_j) &= \left(
    \lambda_i\vc{a}_i+\lambda_j\vc{a}_j,\,
    \lambda_i\vc{y}_i+\lambda_j\vc{y}_j
\right)\\
&=
\left(
    \lambda_i\vc{a}_i+\lambda_j\vc{a}_j,\,
    \left\langle
        \lambda_i\vc{a}_i+\lambda_j\vc{a}_j,
        \vc{x}
    \right\rangle
    +\lambda_i e_i+\lambda_j e_j
\right),
\end{align*}
where the coefficients are
\begin{align*}
\lambda_i=\frac{c_i}{c_i+c_j},
\qquad
\lambda_j=\frac{c_j}{c_i+c_j},
\qquad
\lambda_i+\lambda_j=1,
\end{align*}
calculated based on two arbitrary integers $c_i, c_j \in \Z_{2^{n^c}}$ generated according to the practical precision limit of the MIMO system. 

\begin{remark}
There are two issues in the above construction. First, the combined noise satisfies
\begin{align*}
\lambda_i e_i+\lambda_j e_j
\sim
N\left(
    0,
    \frac{(\lambda_i^2+\lambda_j^2)\alpha^2}{2\pi}
\right),
\end{align*}
which is generally narrower than $\nu_\alpha$. This variance defect can be removed algebraically by using coefficients whose squares sum to one, such as
\begin{align*}
\sqrt{\lambda_i}(\vc{a}_i,y_i)
+
\sqrt{\lambda_j}(\vc{a}_j,y_j).
\end{align*}
This normalisation preserves the covariance of both the public and noise components. However, it does not establish the required distribution. Given $\vc{a}_i$ and $\vc{a}_j$ are sampled from a discrete Gaussian over a lattice $(k/r)L$, then
\begin{align*}
\sqrt{\lambda_i}\vc{a}_i
+
\sqrt{\lambda_j}\vc{a}_j
\end{align*}
is supported on
\begin{align*}
\sqrt{\lambda_i}(k/r)L
+
\sqrt{\lambda_j}(k/r)L.
\end{align*}
Its distribution is induced by taking the weighted sum of two independent discrete Gaussian samples. This support need not be a lattice, so preserving the variance does not resolve the required statistical closeness argument.

The second issue is the missing proof that the resulting sample distribution is statistically close to the continuous MIMO distribution support $\R^n \times \R$. Dean and Goldsmith attempts to argue the closeness by leveraging Claim 3.9 in \cite{regev2009lattices}. They claim that the discrete Gaussian distribution $D_{L, r}$ is within a negligible distance to $\nu_r$, given $r$ is above the smoothing parameter $\eta(L)$ is 
actually a false claim. Regev's Claim 3.9 proves that when adding a discrete Gaussian sample with a continuous Gaussian noise, the resulting sample is statistically close to another continuous Gaussian sample, provided both the discrete and continuous Gaussian distributions are sufficiently wide. 

Therefore, the published proof does not establish Lemma 1. This observation does not rule out a different discrete-to-continuous transformation, possibly using Gaussian convolution under additional smoothing conditions. We do not pursue such a repair here. Instead, in the following sections we grant Lemma 1 and examine whether the remaining reduction follows even under that assumption.
\end{remark}


\section{The gap in verifying discrete-MIMO solution}

Similar to the LWE reduction described in \Cref{sec:lwe}, the transformation of a lattice decoding instance may produce MIMO samples whose effective Gaussian noise has some unknown scale $\beta\leq\alpha$, rather than the scale $\alpha$ expected by the MIMO oracle. The error handling procedure attempts to pad this effective noise with an independent Gaussian, so that the resulting samples have the required noise distribution. Because the original scale $\beta$ is unknown, the procedure tries a polynomial number of padding parameters and requires an efficient method for determining whether the oracle has returned the correct secret. Lemma 2 of \cite{dean2017physical} is intended to provide this verification procedure.

\paragraph{Lemma 2 \cite{dean2017physical}} Verifying solutions of MIMO-Search$_{M,\alpha,k}$. There exists an efficient algorithm that, given $\vc{x}'$ and a polynomial number of samples from $\vc{A}_{\vc{x},\alpha,k}$, for an unknown $\vc{x}$, outputs whether $\vc{x}=\vc{x}'$ with overwhelming probability.

\paragraph{Proof strategy} To build the verifier for the continuous MIMO search problem, Dean and Goldsmith's construction is taking the difference $y - \innerprod{\vc{a}, \vc{x}'}$, where $y$ is part of a genuine MIMO sample $(\vc{a}, y)$. Then distinguish the two distributions to verify whether or not $\vc{x} = \vc{x}'$. This follows the same proof strategy as its counterpart (Lemma 3.6 \cite{regev2009lattices}) in the LWE reduction. The difference being the resulting sample in Dean and Goldsmith's construction is still a $0$ centred Gaussian distribution, but with a wider variance than the original noise distribution $\nu_{\alpha}$. However, in Regev's LWE construction, the resulting sample distribution has a smaller period $1/k$ than the original wrapped noise distribution on the torus $\T$. 

\begin{lemma}[A corrected verification lemma]
Let $\vc{x},\vc{x}' \in \{0,\ldots,M-1\}^{n}$, and suppose that
\begin{align*}
    \frac{k}{M\alpha} \geq \frac{1}{p(n)}
\end{align*}
for some polynomial $p$. There exists a probabilistic polynomial-time
algorithm that, given $\vc{x}'$ and polynomially many independent
samples from $\vc{A}_{\vc{x},\alpha,k}$ for an unknown $\vc{x}$,
determines whether $\vc{x}=\vc{x}'$ with overwhelming
probability.
\end{lemma}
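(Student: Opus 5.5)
The verifier computes residuals and runs a variance test. For each sample $(\vc{a}_i,y_i)$ it forms
\begin{align*}
    z_i = y_i - \innerprod{\vc{a}_i,\vc{x}'}/M = \innerprod{\vc{a}_i,\vc{d}}/M + e_i, \qquad \vc{d}=\vc{x}-\vc{x}'\in\Z^n ,
\end{align*}
using the equivalent form of $\vc{A}_{M,\alpha,k}$. Since $\vc{a}_i\sim\nu_k^n$ is independent of $e_i\sim\nu_\alpha$, each $z_i$ is an i.i.d. centred Gaussian. Its variance is
\begin{align*}
    \sigma_{\vc{d}}^2=\frac{1}{2\pi}\left(\frac{k^2\norm{\vc{d}}^2}{M^2}+\alpha^2\right).
\end{align*}
If $\vc{x}=\vc{x}'$, then $\sigma_{\vc{0}}^2=\alpha^2/2\pi$ exactly. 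If $\vc{x}\neq\vc{x}'$, then $\vc{d}$ is a nonzero integer vector, so $\norm{\vc{d}}^2\ge1$. The variance ratio is therefore at least
\begin{align*}
    1+\frac{k^2}{M^2\alpha^2}\ \ge\ 1+\frac{1}{p(n)^2}.
\end{align*}
This reduces the problem to distinguishing $N(0,\sigma_0^2)$ from any $N(0,\sigma^2)$ with $\sigma^2\ge(1+1/p(n)^2)\sigma_0^2$. Here $\sigma_0$ is known, since $\alpha$ is a parameter of the problem. This step replaces Regev's periodicity argument in Lemma 3.6 \cite{regev2009lattices} by a purely second-moment argument. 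That is appropriate because, as noted above, the non-modular residual is a wider Gaussian rather than a periodic one.

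The test is the empirical statistic
\begin{align*}
    T=\frac{1}{N}\sum_{i=1}^N z_i^2/\sigma_0^2 .
\end{align*}
The verifier accepts iff $T<1+\frac{1}{2p(n)^2}$. Under $\vc{x}=\vc{x}'$, $NT$ is a $\chi^2_N$ variable. Under $\vc{x}\neq\vc{x}'$, $NT/(\sigma^2/\sigma_0^2)$ is $\chi^2_N$. I would apply the standard Laurent--Massart chi-square tail bound,
\begin{align*}
    \Pr\left[\left|\tfrac{1}{N}\chi^2_N-1\right|\ge t\right]\le 2\exp(-Nt^2/8)\quad\text{for } t\in(0,1).
\end{align*}
Two cases follow.
\begin{itemize}
    \item With $t=\frac{1}{4p(n)^2}$, the test errs under equality with probability at most $2\exp(-N/(128\,p(n)^4))$.
    \item Under inequality, the mean of $T$ is at least $1+1/p(n)^2$. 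A deviation of relative size $\Theta(1/p(n)^2)$ is needed to fall below the threshold, and this is controlled by the same bound.
\end{itemize}
Taking $N=n\cdot p(n)^4\cdot\omega(1)$, e.g.\ $N=p(n)^4 n^2$, makes both error probabilities $2^{-\Omega(n)}$. This is polynomial and overwhelming, and the procedure runs in time polynomial in $N$ and $n$.

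The main obstacle is the near-equal case. There $\norm{\vc{d}}=1$ and the variance gap is only $1/p(n)^2$, so the concentration must be tight at scale $1/p(n)^2$. This is exactly why the hypothesis $k/(M\alpha)\ge 1/p(n)$ is needed: without it the gap can be exponentially small and no polynomial-sample test works. A secondary point is that large $\norm{\vc{d}}$ only helps, since the ratio is monotone in $\norm{\vc{d}}$. No union bound over candidate $\vc{d}$ is required, because the test uses only the single given $\vc{x}'$. If finite precision of the samples is a concern, as in Section 3, I would note that rounding to polynomially many bits perturbs each $z_i^2$ by a negligible amount. This keeps the same threshold test valid.
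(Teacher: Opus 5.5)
Your proposal is correct and follows the paper's route: subtract $\innerprod{\vc{a},\vc{x}'}/M$, observe that the residual is a centred Gaussian whose variance exceeds $\alpha^2/2\pi$ by the relative factor $\norm{\vc{d}}^2k^2/(M^2\alpha^2)\ge 1/p(n)^2$, and distinguish the two cases by variance. Your version is more complete than the paper's proof, which stops once it notes that this relative gap is non-negligible; you add the explicit chi-square threshold test, the Laurent--Massart concentration bound and the polynomial sample count $N=p(n)^4n^2$ that give the overwhelming success probability.
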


The correction fixes two issues in Dean and Goldsmith's Lemma 2 proof. First, the missing divisor $M$ in the MIMO distribution sample. Second, since Dean and Goldsmith's approach is based on the distribution being continuous, and looking at the difference between the two sample variances, the best practice is to look at the relative difference between the two variances, not the absolute difference. 

\begin{proof}
The construction leads to one of the two outcomes below.
\begin{align*}
y - \innerprod{\vc{a}, \vc{x}'}/M = e + \innerprod{\vc{a}, \vc{x} - \vc{x}'}/M=
\begin{cases}
    e, &
    \text{if } \vc{x}=\vc{x}',\\
    e+ \innerprod{\vc{a},\vc{x}-\vc{x}'}/M, &
    \text{if } \vc{x}\neq\vc{x}'.
\end{cases}
\end{align*}
Given $\vc{a} \sim \nu_k^n$ is a vector of i.i.d. Gaussian samples from $N(0, k^2/2\pi)$, and $\vc{d} = \vc{x} - \vc{x}'$ is a constant vector, their inner product is a linear combination of i.i.d. Gaussians that follows
\begin{align*}
    \innerprod{\vc{a}, \vc{d}}/M \sim N(0, \frac{\norm{d}^2 k^2}{M^2 2\pi}).
\end{align*}
Adding it to the independent noise then produces an even wider Gaussian distribution
\begin{align*}
    e + \innerprod{\vc{a}, \vc{d}}/M \sim N(0, V_1 = \frac{\alpha^2}{2\pi} + \frac{\norm{d}^2 k^2}{M^2 2\pi}).
\end{align*}

The remaining task is to distinguish the distribution $\nu_{\alpha} = N(0, V_0 = \frac{\alpha^2}{2\pi})$ from the above, by looking at the relative difference between their variances
\begin{align*}
    \frac{V_1 - V_0}{V_0} = \frac{\norm{d}^2 k^2}{M^2 \alpha^2}
\end{align*}
In the case when $\norm{d} = 1$, the ratio is the smallest. Proving it is non-negligible requires 
\begin{align*}
    \frac{k }{M \alpha} \ge \frac{1}{\text{poly}(n)}
\end{align*}

\end{proof}

\begin{remark}
Dean and Goldsmith's combined lower bounds on the constellation size and noise below
\begin{align*}
    &\text{Minimum Noise: } m\alpha/k^2 > \sqrt{n}\\
    &\text{Constellation Size: } M > m 2^{n \log \log n/\log n}
\end{align*}
force the ratio $k/M\alpha$ to be negligible, hence contradicting the corrected version of Lemma 2.

If the ratio is non-negligible, solution verification becomes possible using a sufficiently large polynomial number of samples. Independently, the attacks in \cite{steinfeld2015massive,sakzad2020comments,korzhik2017investigation} and the ILWE recovery algorithm \cite{bootle2018lwe} demonstrated to solve related non-modular noisy linear systems, under conditions involving sample size, noise magnitude and distribution of the public matrix. A further analysis should show that the parameter region required to repair Lemma 2 overlaps an efficiently recoverable region by the attacks and recovery algorithm. 
\end{remark}


\section{The gap in BDD-to-discrete-MIMO reduction}
\label{sec:bdd to mimo}

Granting the preceding reduction steps, Lemmas 1 and 3 of \cite{dean2017physical} are intended to allow the MIMO oracle to recover the secret from constructed discrete-MIMO samples whose effective noise has some unknown scale $\beta\leq\alpha$.\footnote{We grant Lemmas 4-6 for the present analysis and focus on the steps that must establish that the constructed samples follow the required MIMO distributions.} The central step in the classical part of the reduction is then to use this oracle to solve BDD. The resulting BDD oracle is subsequently used by the quantum stage of the Regev-style iterative reduction to generate discrete Gaussian samples with a smaller scale. Lemma 7 claims to establish this reduction from BDD to discrete-MIMO decoding. 

\paragraph{Lemma 7 \cite{dean2017physical}} MIMO-search$_{M,\alpha,k}$ to BDD$_{L,r}$. Let $\alpha > 0, k>0, m>0$, and $M>m2^{n \log \log n / \log n}$. Assume we have access to an oracle that, for all $\beta \le \alpha $, finds $\vc{x}$ given a polynomial number of samples from $\vc{A}_{M,\beta,k}$ (without knowing $\beta$). Then there exists an efficient algorithm that given an n-dimensional  lattice $L(\vc{A})$, a number $r > \sqrt{2}\eta(L(\vc{A}))$, and a target point $\vc{y}$ within distance $d < M\sigma \alpha / (k^2 r \sqrt{2})$ of $L(\vc{A})$, where $\sigma$ is the smallest eigenvalue of $\vc{A}^TA$, returns the unique $\vc{x} \in L(\vc{A})$  closest to $\vc{y}$ with overwhelming probability.

\paragraph{Proof strategy \cite{dean2017physical}} The proof of this lemma attempts to adapt Lemma 3.11 \cite{regev2009lattices} of the LWE reduction. It takes the BDD target $\vc{y}$ and a random dual lattice vector $\vc{v} \leftarrow D_{L^*, r}$ to produce the following sample
\begin{align*}
    (k\vc{v}/r, k \innerprod{\vc{v}, \vc{A}^{-1} \vc{y}}/rM + ke/r),
\end{align*}
and argues that it comes from the discrete-MIMO distribution over the domain $(k/r)L^* \times \R$, by considering the two parts of the sample separately. The first part clearly matches the discrete-MIMO sample's first part. By definition of BDD$_{L,d}$, $\vc{y} = A\vc{c} + \delta$ where $||\delta||<d$, the second part is then rewritten as 
\begin{align*}
    k \innerprod{\vc{v}, \vc{A}^{-1} \vc{y}}/rM + ke/r =  \underbrace{\innerprod{k\vc{v}/r,  \vc{c}}/M}_{\text{embedded secret}} + \underbrace{(\innerprod{k\vc{v}/r,  \vc{A}^{-1} \delta}/M + ke/r)}_{\text{aggregated noise}} \\
\end{align*}
The proof then continues the discussion of why this also follows the desired discrete-MIMO distribution. 

\subsection{Distributional mismatch}

\begin{proposition}
\label{prop:prop1}
Let $L(\vc{A})$ be a full rank lattice, and let $\vc{y} = A \vc{c} + \delta$ be an arbitrary BDD$_{L,d}$ target, where $\vc{c} \in \Z^n$ and $0 < d < \lambda_1(L)/2$. Sample $\vc{v} \leftarrow D_{L^*,r}$ and $\eta_0 \leftarrow \nu_{k\alpha/r}$ independently, define $\vc{a} = k\vc{v}/r$, then construct 
\begin{align*}
    b = \innerprod{\vc{a}, \vc{A}^{-1}\vc{y}}/M + \eta_0.
\end{align*}
If either $\vc{c} \notin \Omega^n$ or $\delta \neq 0$, then there do not exist a secret $\vc{s} \in \Omega^n$ and a parameter $0 < \beta \le \alpha$ such that $(\vc{a}, b)$ has the discrete-MIMO distribution $D_{M, \beta, k}$ obtained by sampling $\vc{a} \leftarrow D_{(k/r)L^*, k}$ and $e \leftarrow \nu_{\beta}$ independently, and outputting 
\begin{align*}
    \left(\vc{a}, \innerprod{\vc{a}, \vc{s}}/M + e \right).
\end{align*}
\end{proposition}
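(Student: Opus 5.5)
The plan is to compare the two joint distributions of $(\vc{a},b)$ through their conditional laws given $\vc{a}$. First I will check that the public components agree. If $\vc{v}\leftarrow D_{L^*,r}$, then $\vc{a}=k\vc{v}/r$ has law $D_{(k/r)L^*,k}$, since $\rho_{r}(\vc{v})=\rho_{k}(k\vc{v}/r)$ and scaling is a bijection $L^*\to(k/r)L^*$. So the marginal of $\vc{a}$ is the same in the construction and in $D_{M,\beta,k}$. This marginal is supported on all of $(k/r)L^*$, and every point has strictly positive mass. Hence two joint laws with this marginal coincide only if the conditional laws of $b$ given $\vc{a}=\vc{a}_0$ coincide for every $\vc{a}_0\in(k/r)L^*$. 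In the discrete case no "almost every" caveat arises.

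Next I will compute both conditional laws explicitly. Write $\vc{y}=\vc{A}\vc{c}+\delta$, so $\vc{A}^{-1}\vc{y}=\vc{c}+\vc{A}^{-1}\delta$. Since $\eta_0$ is independent of $\vc{v}$, the constructed $b$ given $\vc{a}$ is
\begin{align*}
N\Big(\innerprod{\vc{a},\vc{c}+\vc{A}^{-1}\delta}/M,\ \tfrac{(k\alpha/r)^2}{2\pi}\Big),
\end{align*}
while in $D_{M,\beta,k}$ it is $N\big(\innerprod{\vc{a},\vc{s}}/M,\ \beta^2/2\pi\big)$. Two normal laws are equal iff their means and variances are equal. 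Equality would therefore force $\beta=k\alpha/r$, which may itself violate $\beta\le\alpha$, although I will not need this. It would also force
\begin{align*}
\innerprod{\vc{a},\vc{w}}=0 \quad\text{for all } \vc{a}\in(k/r)L^*, \qquad \vc{w}:=\vc{c}+\vc{A}^{-1}\delta-\vc{s}.
\end{align*}

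Then comes the linear-algebra step. Because $L$ is full rank, $L^*$ is full rank, so $(k/r)L^*$ spans $\R^n$. Hence $\vc{w}=0$, i.e. $\vc{s}=\vc{c}+\vc{A}^{-1}\delta$. Since $\vc{s},\vc{c}\in\Z^n$, we get $\vc{A}^{-1}\delta\in\Z^n$, i.e. $\delta\in L$. But $\norm{\delta}\le d<\lambda_1(L)/2<\lambda_1(L)$, which forces $\delta=0$. In that case $\vc{s}=\vc{c}$, and $\vc{s}\in\Omega^n$ requires $\vc{c}\in\Omega^n$. Thus a matching pair $(\vc{s},\beta)$ can exist only when $\delta=0$ and $\vc{c}\in\Omega^n$, which is the contrapositive of the proposition.

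The main obstacle is the first step: justifying that equality of joint distributions reduces to pointwise equality of conditionals. This needs the marginal of $\vc{a}$ to be identical in both models and to charge every lattice point. I will make this explicit by writing the joint density of $(\vc{a},b)$ with respect to counting measure times Lebesgue measure, and noting that $D_{(k/r)L^*,k}(\vc{a}_0)>0$ for all $\vc{a}_0$. The remaining steps are routine Gaussian identification and the fact that a nonzero vector shorter than $\lambda_1(L)$ is not in $L$.
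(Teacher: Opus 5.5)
Your proposal is correct and follows essentially the same route as the paper: identify the common marginal of $\vc{a}$, compare the two conditional Gaussians of $b$ given $\vc{a}$, use that $(k/r)L^*$ spans $\R^n$ to get $\vc{s}=\vc{c}+\vc{A}^{-1}\delta$, and then rule out $\delta\neq 0$ or $\vc{c}\notin\Omega^n$. The differences are cosmetic. You finish by noting that $\delta=\vc{A}(\vc{s}-\vc{c})$ is a lattice vector shorter than $\lambda_1(L)$, whereas the paper observes $\vc{y}=\vc{A}\vc{s}\in L(\vc{A})$ and invokes uniqueness of the closest lattice vector. You also make explicit, via the positivity of every point mass of $D_{(k/r)L^*,k}$, the passage from equal joint laws to equal conditionals that the paper leaves implicit.
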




\begin{proof}
Let $\vc{z}=\vc{A}^{-1}\vc{y}=\vc{c}+\vc{A}^{-1}\delta$ and $\tau=k\alpha/r$. Since $\vc{v}\leftarrow D_{L^*,r}$, the scaled sample $\vc{a}=k\vc{v}/r \sim D_{(k/r)L^*,k}$. 

Conditioned on any fixed $\vc{a}$ in the support
of this distribution, the constructed second component has
distribution
\begin{align}
\label{eq:lemma7 b given a}
    P_{b\mid\vc{a}}
    =
    N\left(
        \frac{\innerprod{\vc{a},\vc{z}}}{M},
        \frac{\tau^2}{2\pi}
    \right).
\end{align}

Suppose, toward a contradiction, that there exist
$\vc{s} \in \Omega^n$ and $0<\beta\leq\alpha$ such that $(\vc{a},b)$ has the claimed discrete-MIMO distribution $D_{M,\beta,k}$. Conditioned on the same $\vc{a}$, we must have
\begin{align}
\label{eq:mimo b given a}
    Q_{b\mid\vc{a}}
    =
    N\left(
        \frac{\innerprod{\vc{a},\vc{s}}}{M},
        \frac{\beta^2}{2\pi}
    \right).
\end{align}
For the two conditional Gaussian distributions to equal, it must satisfy for every $\vc{a}$ in the support of $D_{(k/r)L^*,k}$ that
\begin{align*}
    \beta=\tau \text{ and } \innerprod{\vc{a},\vc{z}-\vc{s}}=0
\end{align*}
Because $(k/r)L^*$ is full rank and spans $\R^n$, this implies
\begin{align*}
    \vc{z}=\vc{s}.
\end{align*}
Consequently,
\begin{align*}
    \vc{A}^{-1}\vc{y}=\vc{s} \in \Omega^n\subseteq\Z^n,
\end{align*}
and therefore $\vc{y}=A\vc{s} \in L(\vc{A})$ is itself a lattice vector.

Since $A\vc{c}$ is the unique closest lattice vector to
$\vc{y}$, it follows that
\begin{align*}
    A\vc{c}=\vc{y}=A\vc{s}.
\end{align*}
As $\vc{A}$ is nonsingular, this gives
\begin{align*}
    \vc{c}=\vc{s}\in\Omega^n \text{ and } \delta=\vc{0},
\end{align*}
contradicting the assumption that either
$\vc{c}\notin\Omega^n$ or $\delta\neq\vc{0}$.
\end{proof}

Below are the key remarks on why the proof strategy in \cite{dean2017physical} did not work.    

\begin{remark}[Missing conditional randomness]
\label{rmk:miss random}
In Regev's construction \cite{regev2009lattices}, the map
\begin{align*}
    \vc{v}
    &\longmapsto
    \vc{a}=L^{-1}\vc{v}\bmod p
\end{align*}
is many to one, hence revealing $\vc{a}$ does not determine $\vc{v}$ completely. Conditioned on a fixed public value $\vc{a}$, the vector $\vc{v}$ remains distributed according to a discrete Gaussian over the coset $pL+L\vc{a}$. In particular,
\begin{align*}
    \vc{v}\mid\vc{a}
    &\sim
    D_{pL+L\vc{a},r}.
\end{align*}
Consequently, the term $\innerprod{\vc{x}',\vc{v}}/p$ remains random after the public component $\vc{a}$ is known. Regev applies Corollary 3.10 \cite{regev2009lattices} to this conditional distribution and shows that this term, together with the independent Gaussian noise $e$, is statistically close to a centred Gaussian of some unknown scale $\beta\leq\alpha$. Because the argument applies to every possible public value $\vc{a}$, it establishes the required closeness of the joint distribution.

Dean and Goldsmith instead defines
\begin{align*}
    \vc{a}
    &=
    \frac{k}{r}\vc{v}.
\end{align*}
For $k,r>0$, this map is injective, hence revealing $\vc{a}$ determines $\vc{v}$ completely. For a fixed BDD target, the additional term
\begin{align*}
    \innerprod{\vc{a},\vc{q}}
    +
        \innerprod{\vc{a},\vc{A}^{-1}\vc{\delta}}/M
\end{align*}
is deterministic conditioned on $\vc{a}$. The conditional distribution of the second part of the constructed sample $(\vc{a},b)$ is therefore a Gaussian whose
centre depends on the public component, rather than an independent $0$-centred Gaussian.
\end{remark}

\begin{remark}[Missing moduli]
\label{rmk:miss mod}
In Lemma 3.10 of LWE reduction \cite{regev2009lattices}, the constructed sample $(\vc{a}, b)$ has the form
\begin{align}
\label{eq:rmk-eq2}
a &= L^{-1} \vc{v} \bmod p,\nonumber \\
b &= \innerprod{\vc{x}, \vc{v}}/p + e \bmod 1 = \innerprod{\kappa_{L^*}(\vc{x}), \vc{v}}/p + (\underbrace{\innerprod{\vc{x}'/p, \vc{v}}}_{\text{random}} + e)  \bmod 1
\end{align}
and 
\begin{align}
\label{eq:rmk-eq1}
    \innerprod{\kappa_{L^*}(\vc{x}), \vc{v}} \bmod p = \innerprod{\vc{s}, \vc{a}} \bmod p
\end{align}
so dividing both sides of \Cref{eq:rmk-eq1} by $p$ then reducing modulo 1 removes the integer term, leaving only
\begin{align*}
    \innerprod{\kappa_{L^*}(\vc{x}), \vc{v}}/p \bmod 1 &= \innerprod{\vc{s}, \vc{a}}/p \bmod 1,
\end{align*}
which is the desired term for LWE distribution samples. Furthermore, as discussed in the previous remark, conditioning on $\vc{a}$ makes $\vc{v}$ follow a discrete Gaussian $D_{pL+L\vc{a},r}$ over the shifted coset, hence when adding with the noise $\vc{e}$ in \Cref{eq:rmk-eq2} it becomes a continuous Gaussian noise with larger scale.

In \cite{dean2017physical}'s Lemma 7 construction, if $\vc{c} \notin \Omega^n$ then $\vc{c} = \vc{s} + M\vc{q}$ for some $\vc{q} \in \Z^n$. The constructed sample $(\vc{a},b)$ has the forms
\begin{align}
\label{eq:rmk-eq3}
    a &= k\vc{v}/r, \nonumber \\
    b &= \innerprod{\vc{a}, \vc{s}}/M + (\underbrace{\innerprod{\vc{a},\vc{q}}+ \innerprod{\vc{a}, \vc{A}^{-1} \vc{\delta}}/M}_{\text{non-integer, fixed}} + \eta_0).
\end{align}
Since $\vc{a} \sim D_{(k/r)L^*, k}$ has a non-integral lattice support, \Cref{eq:rmk-eq3} can be rewritten as  
\begin{align*}
    b &= \innerprod{\vc{a},\vc{s}}/M + l(a) + \theta(\vc{a}) + e, \text{ where }\\
    l(\vc{a}) &= \floor{\innerprod{\vc{a},\vc{q}}+ \innerprod{\vc{a}, \vc{A}^{-1} \vc{\delta}}/M} \text{ and } \theta(a) \in [0,1).
\end{align*}
Again, by \Cref{rmk:miss random} $l(\vc{a})+\theta(\vc{a})$ is deterministic conditioned on $\vc{a}$. Even adding the missing modular 1, the term $\theta(\vc{a})$ is still a constant term, which will shift the noise $e$ to mismatch the discrete-MIMO noise distribution.

\end{remark}

\begin{remark}[Statistical detectability of the mean mismatch]
\label{rmk:stats dist}
\Cref{prop:prop1} proves that the two distributions $P_{b|\vc{a}}$ and $Q_{b|\vc{a}}$ are not identical. It is still possible, however, that they are within negligible statistical distance, so that the discrete-MIMO oracle can be utilised. 

To analyse their statistical distance, we first note that $P_{b | \vc{a}}$ and $Q_{b|\vc{a}}$ can have the same variance. 
In the  proof of Lemma 7 \cite{dean2017physical}, the constructed sample $(\vc{a},b)$ has an independent noise term $\eta_0=ke/r$, where $e\leftarrow\nu_\alpha$. For Dean and Goldsmith to apply Lemma 5 \cite{dean2017physical}, the noise scale must satisfy $k\alpha/r \le\alpha$. Hence, it is possible to have $\beta = k\alpha/r$. Denote $\sigma_0=\beta/\sqrt{2\pi}$, the distributions then become
\begin{align*}
    P_{b\mid\vc{a}}
    =
    N\left(
        \frac{\innerprod{\vc{a},\vc{z}}}{M},
        \sigma_0^2
    \right) \text{ and }
    Q_{b\mid\vc{a}}
    =
    N\left(
        \frac{\innerprod{\vc{a},\vc{s}}}{M},
        \sigma_0^2
    \right).
\end{align*}
The statistical distance between two equal variance Gaussian distributions is
\begin{align*}
    \Delta
    \left(
        N(\mu_1,\sigma^2),
        N(\mu_2,\sigma^2)
    \right)
    =
    4\Phi\left(
        \frac{|\mu_1-\mu_2|}{2\sigma}
    \right)-2,
\end{align*}
where $\Phi$ is the cumulative distribution function of $N(0,1)$.

Since the constructed sample and the discrete-MIMO sample have the same marginal distribution on $\vc{a}$, the statistical distance of their joint distributions is
\begin{align*}
    \Delta(P_{(\vc{a},b)},Q_{(\vc{a},b)})
    =
    \mathbb{E}_{\vc{a}}
    \left[
        4\Phi\left(
            \frac{
                \sqrt{2\pi}
                |\innerprod{\vc{a},\vc{z}-\vc{s}}|
            }{
                2M\beta_0
            }
        \right)-2
    \right].
\end{align*}
Recall $\vc{z} = \vc{c} + \vc{A}^{-1} \vc{\delta} = \vc{s} + M \vc{q} + \vc{A}^{-1} \vc{\delta}$. Hence, a large $M$ may suppress the BDD offset term $\innerprod{\vc{a},\vc{A}^{-1} \vc{\delta}}/M$, but it does not suppress the coefficient reduction term $\innerprod{\vc{a},\vc{q}}$. The statistical distance above may therefore be negligible for some restricted BDD targets where $\innerprod{\vc{a},\vc{q}}$ is small, but Dean and Goldsmith provides no argument showing that it is negligible for the arbitrary targets required by Lemma 7. 
\end{remark}

\subsection{Potential repairs to Lemma 7}

We briefly consider several possible modifications to the proof of Lemma 7 in \cite{dean2017physical}. None appears to constitute a local repair. Each either leaves part of the distributional mismatch unresolved or changes the source or target problem of the claimed reduction.

The first possibility is to restrict the BDD coefficient vector in
\begin{align*}
    \vc{y}
    =
    A\vc{c}+\vc{\delta}
\end{align*}
to $\vc{c}\in\Omega^n$. This removes the coefficient-reduction term because one may take $\vc{s}=\vc{c}$ and $\vc{q}=0$. Dean and Goldsmith's response nevertheless remains
\begin{align*}
    b
    =
    \frac{\innerprod{\vc{a},\vc{s}}}{M}
    +
    \frac{
        \innerprod{\vc{a},\vc{A}^{-1}\vc{\delta}}
    }{M}
    +
    \eta_0.
\end{align*}
Hence, the BDD offset still produces a mean displacement that is $\vc{a}$ dependent.

Moreover, the restriction $\vc{c}\in\Omega^n$ changes the source problem. Instead of receiving an arbitrary BDD target near $L(\vc{A})$, the reduction would receive a target from the restricted set
\begin{align*}
    \left\{
        A\vc{c}+\vc{\delta}
        \mid
        \vc{c}\in\Omega^n,\ 
        \|\vc{\delta}\|_2<d
    \right\}.
\end{align*}
This is a bounded coefficient BDD problem near the finite subset $\vc{A}\Omega^n\subset L(\vc{A})$. Standard BDD allows the closest lattice vector to have an arbitrary coefficient vector in $\Z^n$. Consequently, a separate reduction from standard BDD to this variant would be required. This restriction also does not produce the BDD $\bmod M$ problem analogous to the BDD $\bmod p$ problem used in Regev's reduction \cite{regev2009lattices}.

A second possibility is to reveal only a modular image of the public component, for example
\begin{align*}
    \overline{\vc{a}}
    =
    \vc{a}\bmod M\Z^n,
\end{align*}
rather than the complete vector $\vc{a}$. This would make the public map potentially many to one, but it does not reproduce the conditional coset structure used by Regev.

In particular, two lifts $\vc{a}$ and $\vc{a}+M\vc{t}$ have the same modular image, but their secret terms satisfy
\begin{align*}
    \frac{
        \innerprod{\vc{a}+M\vc{t},\vc{s}}
    }{M}
    =
    \frac{\innerprod{\vc{a},\vc{s}}}{M}
    +
    \innerprod{\vc{t},\vc{s}}.
\end{align*}
Because Dean and Goldsmith's response is not reduced modulo $1$, the additional integer remains in the response. 

Defining a new discrete-MIMO distribution whose public component is $\overline{\vc{a}}$ would change the discrete-MIMO problem and require a new analogue of Lemma 1 connecting the modified discrete problem to the original continuous MIMO problem.

A third possibility is to replace the response by
\begin{align*}
    \overline{b}
    =
    \left(
        \frac{\innerprod{\vc{a},\vc{s}}}{M}
        +
        \mu_{\vc{s}}(\vc{a})
        +
        \eta_0
    \right)
    \bmod 1.
\end{align*}
As shown in \Cref{rmk:miss mod}, this modification replaces the unwanted displacement by
\begin{align*}
    \mu_{\vc{s}}(\vc{a})\bmod 1,
\end{align*}
which is not guaranteed to vanish. Moreover, if the complete vector $\vc{a}$ remains public, then the injectivity discussed in \Cref{rmk:miss random} remains unchanged. Conditioned on $\vc{a}$, the residual modular displacement is still deterministic and the wrapped noise remains centred at an $\vc{a}$ dependent value.

Combining a modular $1$ response with a carefully chosen quotient of the public component could potentially restore both integral cancellation and conditional randomness. Such a construction, however, would be a wrapped or periodic problem closer to LWE \cite{regev2009lattices} or continuous LWE \cite{bruna2021continuous}, rather than Dean and Goldsmith's original non-modular MIMO problem.

Based on the above, repairing Lemma 7 requires more than correcting an isolated parameter or adding a modular operation. A successful replacement must simultaneously map arbitrary BDD coefficients into the MIMO constellation, preserve sufficient randomness after conditioning on the public component, and produce independent centred noise of an allowed scale. None of the modifications considered above currently satisfies all three requirements.


\section{Beyond \citeauthor{dean2017physical}'s construction}

Motivated by the same MIMO idea, the subsequent work of \cite{liu2024physical} introduced the MMIMO-Precoding$_{S,\sigma_e,J}$ problem using massive-MIMO channels and precoding. Its construction provides the legitimate receiver with a secret ``good'' lattice basis, while the eavesdropper is given only a corresponding ``bad'' public basis. The legitimate receiver is therefore intended to decode in polynomial time, whereas the eavesdropper is claimed to face an exponentially hard CVP instance. This trapdoor asymmetry is intended to address the decoding-correctness problem in \cite{dean2017physical} identified by \cite{sakzad2020comments}. However, the claimed reduction from BDD to MMIMO-Precoding in Theorem~2 of \cite{liu2024physical} suffers from the same problems as Dean and Goldsmith's Lemma 7, as discussed in \Cref{sec:bdd to mimo}.

\paragraph{Theorem 2 \cite{liu2024physical}} [MMIMO-Precoding$_{S,\sigma_e,\vc{J}}$ to BDD$_{L(\vc{A}),\sqrt{N}\sigma_e}$ problem] Let $S \ge m2^{N\log\log N /\log N}$, $\sigma_e$ is the standard deviation of the noise, $x \in [0,S)$, $\vc{J}$ is the proposed precoding matrix, $N$ is the antennas number. Suppose there is an efficient algorithm that can solve MMIMO-Precoding$_{S,\sigma_e,\vc{J}}$ problem. Then there exists an efficient algorithm that, given an $N$-dimensional lattice $L(\vc{A})$, can solve the BDD$_{L(\vc{A}),\sqrt{N}\sigma_e}$ problem. Therefore, because BDD$_{L(\vc{A}),\sqrt{N}\sigma_e}$ problem is assumed to be hard, we can conclude that MMIMO-Precoding$_{S,\sigma_e,\vc{J}}$ problem is hard.

The key problematic step of their intended reduction proof is the same construction as that in \cite{dean2017physical}, with the second term being deterministic, hence shifting the Gaussian noise  distribution to non-zero centred.
\begin{align*}
    \frac{k}{r}\innerprod{\vc{v}, B^{-1} \vc{y}} + e' &= \left\langle \frac{k}{r}\vc{v}, B^{-1}(B X + \vc{\delta}) \right\rangle + e'\\
    &= \left\langle \frac{k}{r}\vc{v}, X \right\rangle + \underbrace{\left\langle \frac{k}{r}\vc{v}, B^{-1}\vc{\delta} \right\rangle}_{\text{deterministic}} + e'.
\end{align*}


\section{Conclusion}

We examined the revised Dean-Goldsmith reduction from lattice problems to MIMO decoding and showed that its published proof does not establish the claimed hardness result. Several steps fail to justify that the constructed samples follow the distributions required by the MIMO oracle. These problems arise more broadly from the absence of the modular cancellation and conditional randomness used in Regev's LWE reduction. Our analysis complements the existing decoding attacks by identifying proof-level failures in the reduction and raises similar concerns for subsequent constructions employing the same strategy. These findings do not rule out physical layer security for MIMO systems, but show that it is not established by the existing lattice reduction.


\section*{Acknowledgements}
The author thanks Dr. Yuxuan Li for her initial investigation of this problem. 
This work was partially supported by Deakin University, SEBE PRESS Funding Scheme 2024.

\newpage
\bibliography{references}
\bibliographystyle{plainnat}

\end{document}